\tikzset{
     block/.style={rectangle, draw, fill=red!40, text width=6em,
                   text centered, rounded corners, minimum height=3em},
     arrow/.style={-{Stealth[]}}
}
\def\BibTeX{{\rm B\kern-.05em{\sc i\kern-.025em b}\kern-.08em
    T\kern-.1667em\lower.7ex\hbox{E}\kern-.125emX}}
\newtheorem{theorem}{Theorem}
\newcommand{\algmargin}{\the\ALG@thistlm}
\newlength{\whilewidth}
\algnewcommand{\parState}[1]{\State%
  \parbox[t]{\dimexpr\linewidth-\algmargin}{\strut #1\strut}}
\begin{document}

\title{Linear Model Against Malicious Adversaries with Local Differential Privacy\\
}

\author{Guanhong~Miao,
        A.~Adam~Ding,
        and~Samuel~S.~Wu  \textsuperscript{*}
\thanks{G. Miao and S. Wu are with University of Florida, Gainesville, FL, 32611, USA. e-mail: gmiao@ufl.edu, samwu@biostat.ufl.edu.}
\thanks{A. Ding is with Northeastern University, Boston, MA, 02115, USA. e-mail: a.ding@neu.edu.}
}


\maketitle


\begin{abstract}
Scientific collaborations benefit from collaborative learning of distributed sources, but remain difficult to achieve when data are sensitive. In recent years, privacy preserving techniques have been widely studied to analyze distributed data across different agencies while protecting sensitive information. Most existing privacy preserving techniques are designed to resist semi-honest adversaries and require intense computation to perform data analysis. Secure collaborative learning is significantly difficult with the presence of malicious adversaries who may deviates from the secure protocol. Another challenge is to maintain high computation efficiency with privacy protection. In this paper, matrix encryption is applied to encrypt data such that the secure schemes are against malicious adversaries, including chosen plaintext attack, known plaintext attack, and collusion attack. The encryption scheme also achieves local differential privacy. Moreover, cross validation is studied to prevent overfitting without additional communication cost. Empirical experiments on real-world datasets demonstrate that the proposed schemes are computationally efficient compared to existing techniques against malicious adversary and semi-honest model. 
\end{abstract}

\begin{IEEEkeywords}
Malicious adversary, local differential privacy, chosen plaintext attack, known plaintext attack, linear model.
\end{IEEEkeywords} 


\section{Introduction}
The demand of collaborative learning over distributed datasets increases as recent advances in computing and communication technologies. Agencies cooperate to build statistical models on aggregated datasets to obtain more accurate models. Vertical and horizontal partitioning are two common partitioning approaches to integrate distributed datasets. Vertical partitioning happens when participating agencies have datasets with different sets of features on the same sets of samples. For example, biomedical applications often need to consult records distributed among several heterogeneous domains, such as genotype data, clinical data and medical imaging, to define more accurate diagnosis for a single patient. Horizontal partitioning happens when multiple agencies have datasets with identical features for disjoint sets of samples. In many cases data are collected over different sites with the same features. For instance, hospitals in different locations have the same type of diagnosis records and other health related information for different patients. 

Privacy protection is a big challenge to perform collaborative learning as data may contain sensitive information so that data owners may not be willing to share data unless privacy is guaranteed. For instance, biomedical data integration and sharing raise public concerns that information exchange (e.g., demographics, genome sequences, medications) can put sensitive patient information at risk. A breach can have serious implications for research participants.

A variety of literatures have addressed diverse solutions for privacy preserving collaborative learning. Vaidya and Clifton \cite{vertical} developed secure protocols to find association rules over the vertically partitioned data. Nikolaenko et al. \cite{cryptographic3} proposed a secure linear regression approach for a scenario where many parties upload their data to a server to build the model. A privacy preserving linear regression protocol was investigated for vertical partitioning on high-dimensional data \cite{garble1}. Secure systems that work for both vertical and horizontal partitioning were presented in \cite{lm_01,lm_02}. Maliciously secure coopetitive learning for horizontally partitioned linear models were proposed in \cite{lm_03}.

In this paper, we develop privacy preserving schemes for linear models. Because linear models are easy to interpret and statistically robust, they are widely used in bioinformatics research \cite{application6}, financial risk analysis \cite{application7}, and are the foundation of basis pursuit techniques in signal processing. We investigate linear model schemes to achieve security against malicious adversaries (which means adversary may use any efficient attack strategy and thus may arbitrarily deviate from the protocol specification) with efficiency to permit use on relatively large datasets. Our contributions are as follows:

\begin{enumerate}
    \item Our scheme is resilient to malicious adversaries, including chosen plaintext attack, known plaintext attack, and collusion attack which compromises all but one agency. If any agency deviates from the scheme, the result is not accurate but still no sensitive information of original data is disclosed. 
     \item The proposed scheme satisfies local differential privacy, such that the probability distribution of scheme output is roughly the same for any two inputs. The output does not reveal significant information about any particular element in the input.
        \item Cross validation is feasible in the proposed schemes to prevent overfitting problem and select penalty parameters in ridge regression without additional communication cost. 
    \item The scheme has high computational efficiency to analyze large datasets with high accuracy. 
\end{enumerate}

The rest of the paper is organized as follows. Section II reviews the related work. Preliminaries are presented in Section III. In Section IV, we provide the system overview. Section V introduces the proposed scheme. Security analysis is given in Section VI. Section VII provides the performance evaluations by simulation. Finally, Section VIII concludes the paper.



\section{Related work}
Techniques for privacy preserving data analysis fall into two major categories: perturbation-based approaches and secure multiparty computation (SMC)-based approaches. Differential privacy \cite{DP} has been widely embraced by research communities as an accepted notion of privacy for statistical analysis.

Data perturbation techniques have been widely studied as a tool of privacy preserving data mining \cite{perturbation1,perturbation4,multiplicative4}. Chen et al. developed geometric perturbation \cite{perturbation4} and added noise term to enhance the security. Data utility is preserved using the geometric perturbation. The noise term drops the utility and is not ideal to build accurate models. Liu et al. proposed random projection perturbation \cite{multiplicative4} by dimension reduction approach. The dimension reduction approach loses some information of the data and large sample size is required in order to reach acceptable power. Moreover, plenty of studies focused on linear models using perturbation approaches to encrypt data. Linear regression based on matrix encryption techniques were investigated for different privacy preserving problems \cite{linearprivacy, securematrixproducts,samuel2017new,matrix_mask}. Du et al. \cite{linearprivacy} studied linear regression in Secure 2-party Computation framework where each of the two parties holds a secret data set and wants to conduct analysis on the joint data. Karr et al. \cite{securematrixproducts} used secure matrix product technique to allow multiple parties to estimate linear regression coefficients but was not immune to breaches of privacy. Wu et al. \cite{samuel2017new} investigated schemes to collect data privately granting data users access to non-sensitive personal information while sensitive information remains hidden. Matrix encryption were investigated for privacy preserving techniques in \cite{matrix_mask,private_2020,matrix_0,matrix_1,smm,mm_2019}. \cite{private_2020} investigated secure outsourcing face recognition based on elementary matrix transformation. \cite{matrix_0} studied secure algorithms for outsourcing linear equations. Secure outsourcing algorithms of matrix operations were proposed in \cite{matrix_1}. In \cite{mm_2019}, matrix filled with random integers were used for encryption by both-sided matrix multiplication which ensures robustness to known plaintext attack and brute-force attack. Sparse matrix encryption was used to design privacy preserving outsourced computation in \cite{smm}. Chen et al. \cite{matrix_mask} investigated efficient linear regression outsourcing to a cloud. The secure schemes were questioned for the vulnerability to disclosure attack and its research significance \cite{disclosure_c}. Due to the trade-off between data utility and disclosure risk, matrix encryption methods proposed in previous studies face potential disclosure risks and may release extra information of original data under certain circumstances.

Plenty of previous works utilized cryptographic techniques and SMC to control disclosure risk \cite{cryptographic1,cryptographic2,cryptographic3,garble1,lm_01,lm_02,lm_03} for secure linear models. By allowing the evaluation of arbitrary computations on encrypted data without decrypting it, homomorphic encryption (HE) schemes were predominantly applied in state-of-the-art SMC-based approaches. Hall et al. \cite{cryptographic1} proposed an iteration algorithm to compute the inversion of matrix privately for secure linear regression. Cock et al. \cite{cryptographic2} further improved the inversion protocol for the parties to compute linear regression coefficients cooperatively. Nikolaenko et al. \cite{cryptographic3} proposed a hybrid approach using garbled circuit method for a large distributed dataset among million of users. The major bottlenecks of this protocol are that the number of gates in the garbled circuit is large and the computation cost grows proportionally. Gasc\'{o}n et al. \cite{garble1} extended protocol in \cite{cryptographic3} for vertically partitioned data distributed among agencies. Conjugate gradient descent was applied to provide a more efficient computation while maintaining accuracy and convergence rate. Maliciously secure linear model, \textit{Helen}, was investigated for horizontal partitioning in \cite{lm_03}. \textit{Helen} was designed for the cases that organizations have large amount of samples (up to millions) and a smaller number of features (up to hundreds). Using homomorphic encryption and SMC protocols, \textit{Helen} is able to achieve high level of privacy protection but also requires expensive computation cost. \textit{GuardLR} \cite{lm_2022} is another secure linear regression designed to against malicious adversary. \textit{GuardLR} requires two cloud servers with one for secure training and another for secure prediction. The collusion between these two cloud servers is not allowed. 

\begin{table}[htbp]
\begin{center}
\caption{Related work of privacy preserving linear regression models. ``K-party: Yes" refers to $K(>2)$ agencies can perform the computation with equal trust (do not need to include the two non-colluding servers model).} \label{linear}
\begin{tabular}{c c c }
\toprule
Privacy scheme & K-party? & Maliciously secure? \\
\midrule
\cite{linearprivacy} & Yes & No  \\  
\cite{linearprivacy0} & Yes  & No  \\ 
\cite{securesum} & Yes &No    \\ 
\cite{securematrixproducts} & No & No   \\ 
\cite{cryptographic1} & Yes & No  \\ 
\cite{cryptographic3} &No & No  \\ 
\cite{cryptographic2} & No  &  No \\ 
\cite{garble1} & No & No  \\
\cite{lm_01} & No & No  \\
\cite{lm_02} & No & No  \\
\cite{lm_03} & Yes & Yes  \\
\cite{lm_2022} & Yes & Yes \\
Our scheme & Yes & Yes \\
\bottomrule
\end{tabular}
\end{center}
\end{table}

Table \ref{linear} summarizes main references studying privacy preserving linear model in collaborative learning setting. Apart from the two common secure categories above, a distributed computation algorithm for linear regression was given in \cite{linearprivacy0}. The limitations of this method were also introduced such as the possible disclosure risk from the coefficients. Prior secure schemes did not provide malicious security except \cite{lm_03,lm_2022} and the training process in most of them require outsourcing to two non-colluding servers. Privacy preserving ridge regression has also been investigated previously \cite{cryptographic3,garble1,lm_03}. Notably, computation burdens are bottlenecks of previous secure linear models. More specifically, approaches based on HE cryptosystems involve an encoding mechanism, i.e., scaling, that converts floating-point numbers with fixed precision to integers. Larger scaling factors yield larger encryption parameters and worse performance while smaller scaling factors yield smaller encryption parameters and better performance but outputs may vary beyond the tolerance and lead to prediction inaccuracy \cite{HE}. Moreover, SMC-based approaches expect the data owners to be online and participate in the computation throughout the entire process. 

Privacy preserving linear model achieving differential privacy (DP) has also been investigated \cite{OLS_1,OLS_2,OLS_3,OLS_4,OLS_5,OLS_6}. \cite{OLS_1} enforced DP by perturbing the objective function of the optimization problem. \cite{OLS_3} reduced the dimension of features while \cite{OLS_4} reduced the dimension of samples for DP. \cite{OLS_5} built an local differential privacy-compliant stochastic gradient descent algorithm. \cite{OLS_6} designed univariate linear regression (i.e., model only includes one feature) with DP. All these studies focused on analyzing single dataset instead of collaborative learning. Moreover, dimension reduction (either dimension of samples or features) disables cross validation or deriving model estimates for each feature. 

In this paper, we propose secure and efficient linear models for collaborative learning enabling practical implementation for high-dimensional data analysis. The proposed schemes are against malicious adversary while satisfying differential privacy.

\section{Preliminaries}

\subsection{Linear model}
The linear regression model is
\begin{displaymath}
Y=X\beta+e,~~e\sim \mathcal{N}(0,\sigma^2I)
\end{displaymath}
where $Y\in \mathbb{R}^n$ is a vector of responses, $X\in \mathbb{R}^{n\times p}$ is the feature matrix, $\beta$ is a $p\times 1$ vector of regression coefficients, $e$ is an $n\times 1$ vector of random errors and $\mathcal{N}$ denotes multivariate normal distribution. $n$ is the number of samples and $p$ is the number of features. The estimate for $\beta$ is $\hat{\beta}=(X^TX)^{-1}X^TY$. 

Ridge regression is widely used to do variable selection for high-dimensional datasets \cite{ridge}. It minimizes the residual sum of squares subject to a bound on the $L_2$-norm of the coefficients
\begin{displaymath}
\hat{\beta}_{ridge}=\underset{\beta}{argmin}\{(Y-X\beta)^T(Y-X\beta)+\lambda\beta^T\beta\}.
\end{displaymath}
Ridge solutions are given by $\hat{\beta}_{ridge}=(X^TX+\lambda I)^{-1}X^TY$ where $\lambda$ is a tuning parameter.

\subsection{Local differential privacy}

A randomized function $f$ satisfies $\epsilon$-local differential privacy if and only if for any two inputs $t$ and $t'$ in the domain of $f$, and any $s\subseteq S$ where $S$ contains $f$'s all possible output, we have
\begin{displaymath}
P(f(t)\in s)\le e^{\epsilon}P(f(t')\in s).
\end{displaymath}




\section{System overview}
 \subsection{System model}
\begin{figure}[htbp]
\centering
\includegraphics[width=3.4in]{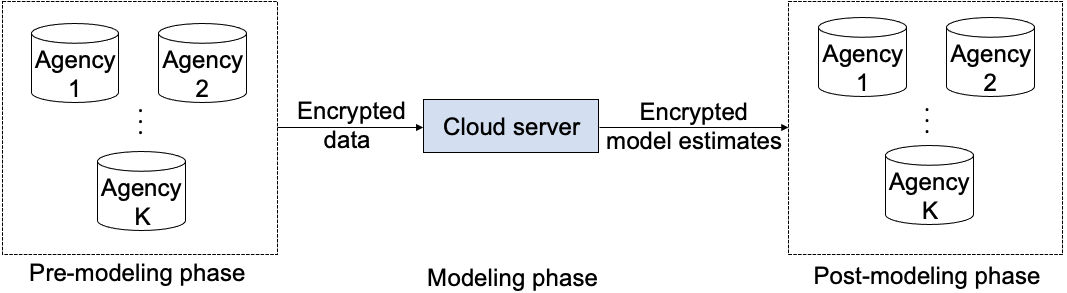}
\caption[Privacy preserving scheme framework.]{Privacy preserving scheme framework.}\label{frame}
\end{figure}
This paper focuses on collaborative learning in which data is stored by different agencies locally and they try to build linear regression model using all the data while preserving data confidentiality. Suppose there are $K$ agencies and agency $i$ has $X_i$ and $Y_i$  $(i=1,\cdots,K)$. We focus on horizontal partitioning scenario where agencies has the same set of features and different sets of samples.

The proposed privacy preserving scheme contains pre-modeling, modeling and post-modeling phase (Figure \ref{frame}). Data are encrypted in the pre-modeling phase. Encrypted data are then sent to cloud computing service provider (i.e., cloud server). The cloud server can be any agency participating in the collaborative learning. In the modeling phase, the cloud server conducts privacy preserving linear model using encrypted data. The cloud server then sends encrypted model results back to agencies. The encrypted model results are decrypted in the post-modeling phase.

\subsection{Threat model}\label{threat_model}
Assume that adversary model is malicious, i.e., agencies or the cloud server may arbitrarily deviate from the scheme specification and use any efficient attack strategy. The malicious adversary can be classified into the following categories.
\begin{itemize}
      \item The malicious agencies may generate fake data and conduct chosen plaintext attack \cite{KPA_CPA} to recover the encryption matrices generated by other agencies. 
      \item If part of the original data is disclosed, malicious adversary can recover private data by known plaintext attack \cite{multiplicative5}. 
      \item The agencies and cloud server may execute different computation than expected. 
      \item The cloud server may intentionally return a random or forged result. 
      \item A collusion attack may compromises any malicious agency or cloud server.
\end{itemize}


\textit{Out of scope attacks}: The proposed schemes do not prevent a malicious agency from inputing a bad dataset for the computation in attempt to alter model result (i.e., poisoning attack \cite{poisoning}). However the proposed scheme ensures that once an agency provides an input into the computation, the agency is bound to using the same input consistently throughout the entire computation.

 \subsection{Design goals}
The design goals are summarized as follows.
\begin{itemize}
      \item Privacy: Any agency may perform maliciously. A strong threat model includes collusions which compromise all but one agency, cloud server, and adversary outside the collaborative learning. We design encryption method resilient to malicious behavior of agencies and cloud server, including collusion attack, chosen plaintext attack, and known plaintext attack.
      \item Soundness: The privacy preserving scheme is able to verify if all the agencies and cloud server behave honestly.
      \item Efficiency: The secure scheme is computationally efficient and achieves high accuracy.
\end{itemize}






\section{The proposed scheme} \label{intro}

\subsection{Pre-modeling phase} 


We first describe the encryption method for a single agency. To simplify the notation, we use $X$ as feature matrix and $Y$ as response. Additive noise $\Delta$ is added to original dataset in the first layer, i.e, $\tilde{X}=X+\Delta$. In the second layer, data is further encrypted by row and column transformation, i.e., $X^{*}=A\tilde{X}B$ ($A$ and $B$ are randomly generated orthogonal and invertible matrix, respectively). To summarize, the encryption mechanism  $g$ is $g(X)=A(X+\Delta)B$. For response $Y$ encryption, two pseudo responses are generated with one for verification and another to enhance encryption. Define the first pseudo response $Y_{s1}=\sum_{i=1}^{p}x_i$ where $x_i$ is the $i$-th column in $X+\Delta$. The second pseudo response $Y_{s2}$ is generated randomly. Let $Y^*=A[Y,Y_{s1},Y_{s2}]C$ where $C$ is a $3\times 3$ random invertible matrix. The detailed encryption approach of each agency is given as follows. 

\subsubsection{First layer encryption}
Agency $i$ generates noise matrix $\Delta_i$ with each element following Gaussian distribution $N(0,\sigma^2)$. 
$\Delta$ is added to $X_i$ to get encrypted data $\tilde{X}_i=X_i+\Delta_i$.

\subsubsection{Second layer encryption} \label{horizontal_0}
In order to maintain data utility for linear model, $\tilde{X}_i$ $(i=1,2,\cdots,K)$ needs to be encrypted by identical $B$ in the final encryption data since the aggregated dataset is in the form of $[\tilde{X}^{T}_1,\tilde{X}^{T}_2,\cdots,\tilde{X}^{T}_K]^T$. Because each agency does not know encryption matrices generated by other agencies, we encrypt $\tilde{X}_i$ by all agencies with the commutative encryption matrix $B_i$ generated by agency $i$. This specific encryption approach guarantees that $\tilde{X}_i$ is encrypted by identical $B=\underset{i=1}{\overset{K}{\prod}}B_i$.  

Agency $i$ generates orthogonal matrix $A_{i1}$, $A_{i2}$, $\cdots$, $A_{iK}$, invertible matrix $B_{i}$ and $3\times 3$ invertible matrix $C_i$ ($i=1,2,\cdots,K$). Suppose agency $i$ has $n_i$ samples and $p$ features. To make $B_i$ commutative, each agency first generates $p\times p$ dimensional $B_0$ locally as the matrix basis using the same random seed. Then agency $i$ generates a vector of random coefficients $(b_{i1}, \cdots, b_{ip})$ and computes $B_i=\underset{j=1}{\overset{p}{\sum}}b_{ij}B_0^j$. Similarly, agency $i$ generates common invertible matrix basis $C_0$, a random vector $(c_{i1}, c_{i2}, c_{i3})$, and compute $C_i=\underset{j=1}{\overset{3}{\sum}}c_{ij}C_0^j$. $B_i$ has dimension $p\times p$, $C_i$ has dimension $3\times 3$, and the dimension of $A_{i1}$, $A_{i2}$, $\cdots$, $A_{iK}$ is $n_1\times n_1$, $n_2\times n_2$, $\cdots$, $n_K\times n_K$, respectively ($i=1,2,\cdots,K$). Agency $i$ generates pseudo responses $Y_{s1i}$ and $Y_{s2i}$. Let $X^{*}_i=A_{ii}\tilde{X}_iB_i$ and $Y^*_i=A_{ii}[Y_i,Y_{s1i},Y_{s2i}]C_i$. Agency $i$ releases $X^{*}_i$ and $Y^{*}_i$ to other agencies. Agency $j$ $(j\notin i)$ encrypts received data and releases $A_{ji}X^{*}_iB_j$ and $A_{ji}Y^{*}_iC_j$. $X^*_i$ and $Y^{*}_i$ are encrypted by all agencies in a pre-specific order.


\begin{algorithm2e}
\KwIn{$p\times p$ invertible matrix $B_0$ and $3\times 3$ invertible matrix $C_0$}
\KwOut{Encrypted feature matrix and response}
    \For{Agency $i=1,2,\ldots,K$}
    {
      generate a $p$-dimensional random vector $(b_{i1}, \cdots, b_{ip})$, a $3$-dimensional random vector $(c_{i1}, c_{i2}, c_{i3})$. noise matrix $\Delta_i$ following $N(0,\sigma^2)$ and orthogonal matrices $A_{i1},A_{i2},\ldots, A_{iK}$ with dimension $n_1\times n_1, n_2\times n_2, \ldots, n_K\times n_K$, respectively\;
      $B_i=\underset{j=1}{\overset{p}{\sum}}b_{ij}B_0^j$, $C_i=\underset{j=1}{\overset{3}{\sum}}c_{ij}C_0^j$\;
    }
   \For{Agency $i=1,2,\ldots,K$}
    {
    generate $Q_i$, a permutation of $\{1,\ldots,K\}$ with $Q_i(1)=i$\;
    generate $\tilde{X}_i=X_i+\Delta_i$, let the first pseudo response $Y_{s1i}$ be the sum of all columns in $\tilde{X}_i$\;
    generate the second pseudo response $Y_{s2i}$ randomly\;
    compute $X^{*}_i=A_{ii}\tilde{X}_iB_i$, $Y^{*}_i=A_{ii}[Y_i,Y_{s1i},Y_{s2i}]C_i$ and send to $Q_i(2)$\;
    $j=2$\;
    \While{$j\le K$}{
    Agency $Q_i(j)$ compute $X^{*}_i=A_{Q_i(j),i}X^{*}_iB_{Q_i(j)}$ and $Y^{*}_i=A_{Q_i(j),i}Y^{*}_iC_{Q_i(j)}$\;
    send $X^{*}_i$ and $Y^{*}_i$ to agency $Q_i(j+1)$\;
    $j=j+1$\;
    }
    return $X^{*}_i$ and $Y^{*}_i$\;
    }
\caption{Pre-modeling phase}\label{alg_horiz}
\end{algorithm2e}

Algorithm \ref{alg_horiz} gives detailed encryption procedures.

Suppose the communication order among agencies is $1\rightarrow 2\rightarrow \cdots \rightarrow K \rightarrow 1$, the final released dataset would be $X^*=A\tilde{X}B$ where $\tilde{X}=[\tilde{X}^T_1,\tilde{X}^T_2,\cdots,\tilde{X}^T_K]^T$, $A=\begin{scriptsize}\left( \begin{array}{cccc} A_{K1}\cdots A_{21}A_{11}&\mathbf{0}&\cdots&\mathbf{0} \\ \mathbf{0}&A_{12}A_{K2}\cdots A_{22}&\cdots&\mathbf{0} \\ \vdots & \vdots & \vdots & \vdots \\ \mathbf{0} &\mathbf{0}&\cdots&A_{(K-1)K}\cdots A_{1K}A_{KK} \end{array} \right)\end{scriptsize}$ and $B=B_1B_2\cdots B_K$ as $B_i$ $(i=1,2,\cdots,K)$ is commutative. Let $Y_a$ be the aggregated response data with 3 columns where the first column is the responses of $K$ agencies (i.e., $[Y^T_1,Y^T_2,\cdots,Y^T_K]^T$), the second column is the first pseudo response of $K$ agencies (i.e., $[Y^T_{s11},Y^T_{s12},\cdots,Y^T_{s1K}]^T$), and the third column is the second pseudo response of $K$ agencies (i.e., $[Y^T_{s21},Y^T_{s22},\cdots,Y^T_{s2K}]^T$). Then $Y^*=AY_aC$ where $C=C_1C_2\cdots C_K$.

The computation complexity of encryption matrix generation and multiplication increases when the dimension of dataset increases. For dataset with big number of samples, we partition orthogonal encryption matrix $A_{ii}$ into block diagonal matrix to improve computation efficiency. It is the same to partition $A_0$ since $A_{ii}$ is generated using matrix basis $A_0$. For example, there are $10,000$ samples in $X_i$. If $A_0$ is partitioned with block size $100$, agency $i$ generates $100$ orthogonal matrices with dimension $100\times 100$ instead of one $10,000\times 10,000$ matrix. In other words, $A_{ii}=diag(\tilde{A}_{1},\cdots,\tilde{A}_{100})$ where $\tilde{A}_i$ $(i=1,\cdots,100)$ are random orthogonal matrices. The same strategy can be applied when the dimension of features is big and agencies use partitioned block diagonal matrix $B$ as encryption matrix. 



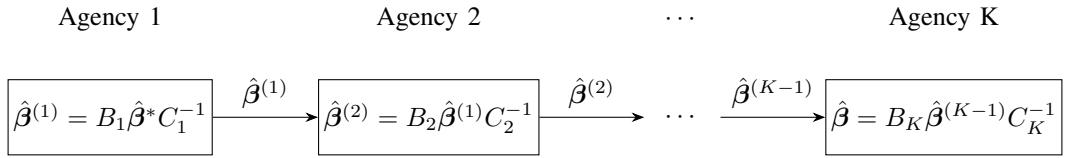
\begin{figure*}[!b]
\begin{center}
\begin{tikzpicture}
    \tikzstyle{ann} = [draw=none,fill=none,right]
    \matrix[nodes={draw,minimum size=10mm,inner sep=2pt},
        row sep=0.3cm,column sep=1.4cm] {
    \node[draw=none,fill=none] {Agency 1}; &
    \node[draw=none,fill=none] {Agency 2}; &
    \node[draw=none,fill=none] {$\cdots$}; &
    \node[draw=none,fill=none] {Agency K};\\
        \node[rectangle] (a) {$\hat{\bm{\beta}}^{(1)}=B_1\hat{\bm{\beta}}^{*}C^{-1}_1$}; &
    \node[rectangle,minimum height=10mm,minimum width=18mm] (b) {$\hat{\bm{\beta}}^{(2)}=B_{2}\hat{\bm{\beta}}^{(1)}C^{-1}_2$}; &
     \node[draw=none,fill=none] (c) {$\cdots$}; &
    \node[rectangle,minimum height=10mm,minimum width=12mm] (d) {$\hat{\bm{\beta}}=B_K\hat{\bm{\beta}}^{(K-1)}C^{-1}_K$};\\
    };
    \draw[arrow] (a)  --  (b) node [above,pos=0.5] {$\hat{\bm{\beta}}^{(1)}$};
     \draw[arrow] (b)  --  (c) node [above,pos=0.5] {$\hat{\bm{\beta}}^{(2)}$};
      \draw[arrow] (c)  --  (d) node [above,pos=0.5] {$\hat{\bm{\beta}}^{(K-1)}$};
\end{tikzpicture}
\end{center}
\caption[]{Post-modeling procedure.} \label{post_f}
\end{figure*}

\subsection{Modeling and post-modeling phase} \label{beta} 


We first analyze model results for special case $\Delta=\bm{0}$. The encrypted response is $Y^*=AY_aC$ where $Y_{a}$ contains the true response and two pseudo responses. $\hat{\bm{\beta}}^{*}=({X^{*}}^TX^{*})^{-1}{X^{*}}^TY^{*}=B^{-1}\hat{\bm{\beta}}C$. $\hat{\bm{\beta}}$ is a $p\times 3$ matrix with the first column being the true estimate, the second column corresponding to the estimate for the first pseudo response and the third column being the estimate for the second pseudo response. To compute $\hat{\bm{\beta}}$, $\hat{\bm{\beta}}^{*}$ is sent to each agency in order to eliminate $B=\underset{i=1}{\overset{K}{\prod}}B_i$ and $C=\underset{i=1}{\overset{K}{\prod}}C_i$. The order of the agency decryption can be random since $B_iB_j=B_jB_i$ and $C_iC_j=C_jC_i$. Detailed procedure of $\hat{\bm{\beta}}^{*}$ decryption is given in Figure \ref{post_f}.

\begin{figure*}[!t]
\begin{center}
\begin{tikzpicture}
    \tikzstyle{ann} = [draw=none,fill=none,right]
    \matrix[nodes={draw,minimum size=10mm,inner sep=2pt},
        row sep=0.01mm,column sep=0.2cm] {
        \node[draw=none,fill=none] {Threat model}; &
    \node[draw=none,fill=none] {Response}; &
    \node[draw=none,fill=none] {Feature matrix}; &
    \node[draw=none,fill=none] {$\hat{\beta}$}; &
    \node[draw=none,fill=none] {$\hat{\beta}$ (after post-modeling phase)};\\
     \node[draw=none,fill=none] {Non-PP}; &
        \node[draw=none,fill=none] (a) {\begin{scriptsize}$Y$\end{scriptsize}}; &
    \node[draw=none,fill=none] (b) {\begin{scriptsize}$X$\end{scriptsize}}; &
     \node[draw=none,fill=none] (c) {\begin{scriptsize}$(X^TX)^{-1}X^TY$\end{scriptsize}}; &
    \node[draw=none,fill=none] (d) {--};\\
    \node[draw=none,fill=none] {Semi-honest}; &
     \node[draw=none,fill=none] (a) {\begin{scriptsize}$AY$\end{scriptsize}}; &
    \node[draw=none,fill=none] (b) {\begin{scriptsize}$A(X+\Delta)B$\end{scriptsize}}; &
     \node[draw=none,fill=none] (c) {\begin{scriptsize}$B^{-1}((X+\Delta)^T(X+\Delta))^{-1}(X+\Delta)^TY$\end{scriptsize}}; &
    \node[draw=none,fill=none] (d) {\begin{scriptsize}$((X+\Delta)^T(X+\Delta))^{-1}(X+\Delta)^TY$\end{scriptsize}};\\
    \node[draw=none,fill=none] {Malicious adversary}; &
    \node[draw=none,fill=none] (a) {\begin{scriptsize}$AY_aC$\end{scriptsize}}; &
    \node[draw=none,fill=none] (b) {\begin{scriptsize}$A(X+\Delta)B$\end{scriptsize}}; &
     \node[draw=none,fill=none] (c) {\begin{scriptsize}$B^{-1}((X+\Delta)^T(X+\Delta))^{-1}(X+\Delta)^TY_aC$\end{scriptsize}}; &
    \node[draw=none,fill=none] (d) {\begin{scriptsize}$((X+\Delta)^T(X+\Delta))^{-1}(X+\Delta)^TY_a$\end{scriptsize}};\\
    };
   
\end{tikzpicture}
\end{center}
\caption[]{Linear regression models resilient to different threat models. Non-PP: non-privacy preserving model. $Y_a=[Y,Y_{s1},Y_{s2}]$. } \label{MA}
\end{figure*}


For $\Delta\ne\bm{0}$, $\hat{\bm{\beta}}_{\Delta}^*=B^{-1}((X+\Delta)^T(X+\Delta))^{-1}(X+\Delta)^TY_aC$. The decryption approach is the same as given above. After decryption, $\hat{\bm{\beta}}_{\Delta}=((X+\Delta)^T(X+\Delta))^{-1}(X+\Delta)^TY_a$ is a $p\times 3$ matrix with the first column being the true model estimate. 

\textit{Privacy preserving ridge regression}: In order to compute $\hat{\bm{\beta}}_{ridge}$, matrix $B^TB$ needs to be computed and released additionally from pre-modeling phase using similar procedures in Algorithm \ref{alg_horiz}. For a given $\lambda$, we have $\hat{\bm{\beta}}_{ridge}^{*}=[{X^{*}}^TX^{*}+\lambda (B^TB)^{-1}]^{-1}{X^{*}}^TY^{*}=B^{-1}\hat{\bm{\beta}}_{ridge}C$ from encrypted datasets. So $\hat{\bm{\beta}}_{ridge}=B\hat{\bm{\beta}}_{ridge}^{*}C^{-1}$. Follow the decryption procedure in Figure \ref{post_f} to get $\hat{\bm{\beta}}_{ridge}$. The pseudo response for ridge regression is different from those used in linear regression. Since different $\lambda$ yields different model estimates using previously defined pseudo response $Y_{s1}=\sum_{i=1}^{p}x_i$, let the new $Y_{s1}$ be a vector of $\bm{0}$'s for ridge regression to against malicious adversary.


\subsection{Cross validation}
Cross validation is widely used to prevent overfitting problem. It is also the golden standard to select optimal $\lambda$ for ridge regression. By partitioning orthogonal matrix $A$, the proposed scheme enables cross validation for privacy preserving linear models without additional communication cost. The procedure of partitioning $A$ has been illustrated above. For $k$-cross validation, each agency uses $k$-blocked orthogonal matrix to encrypt data and samples within each block are used as training or testing set. Our schemes are efficient and practical to change regularization parameter $\lambda$ while previous secure ridge regression models used public and fixed $\lambda$ \cite{cryptographic1,cryptographic3,garble1,lm_03}.




\section{Security analysis} \label{security}
Without loss of generality, we use $X$ to denote the original data, and $\Delta$, $A$, $B$ to denote additive noise matrix, orthogonal and invertible encryption matrices, respectively.

Matrix encryption has been widely used in previous privacy preserving studies. $X+\Delta$ has been widely used as encryption method but has disclosure risks demonstrated in previous literatures \cite{attack_1,noise_4,attack_2,attack_3}. Privacy methods with encrypted matrix in the form of $XB$ and $AX$ were studied in \cite{perturbation4,multiplicative4,linearprivacy,samuel2017new}. $XB$ and $AX$ have high data utility but often face known plaintext attack (aka, known input-output attack) and chosen plaintext attack \cite{multiplicative5,KPA_CPA}. Sparse matrix encryption in the form of $AXB$ ($A$ and $B$ are sparse matrices) were investigated with enhanced privacy in \cite{matrix_mask,private_2020,matrix_0,matrix_1,smm}. The privacy guarantee needs to be argued carefully for the sparse matrix encryption as discussed in \cite{matrix_0,matrix_1}. Matrix encrypted in the form of $AXB$ has been proved to solve different problems without releasing sensitive information in previous works \cite{private_2020,matrix_0,matrix_1,smm,mm_2019}. 


In this paper, we encrypt data from both left side and right side using dense orthogonal/invertible matrices. The framework of privacy analysis against malicious adversary is as follows.
      \begin{itemize}
      \item detection of agencies'/cloud server's malicious computation;
      \item security of encryption matrix $B$ $\rightarrow$ resilience of chosen plaintext attack;
       \item Security of $X$;
      \begin{itemize}
      \item partial prior information of $X$ disclosed $\rightarrow$ resilience of known plaintext attack; 
      \item local differential privacy;
     \end{itemize}
      \item Resilience of collusion attack.
      \end{itemize}

\subsection{Resilience to malicious adversary}
\subsubsection{Detection of agencies'/cloud server's malicious computation}
The first pseudo response $Y_{s1}$ is included in the encrypted response matrix $Y_a$ to verify the output received from the cloud (Figure \ref{MA}). Since $Y_{s1}=\sum_{i=1}^{p}x_i$ where $x_i$ is the $i$-th column in $X+\Delta$, $\hat{\beta}_{s1}$ (the second column in model estimate matrix $\hat{\bm{\beta}}$) equals a vector of 1's if each agency follows the proposed scheme. For ridge regression, $Y_{s1}=\bm{0}$ and $\hat{\beta}_{s1}=\bm{0}$. For secure collaborative learning, the following requirements need to be met to get $\hat{\beta}_{s1}=\bm{1}$ for linear regression or $\hat{\beta}_{s1}=\bm{0}$ for ridge regression. 
\begin{enumerate}
    \item For all $j\in \{1,2,\cdots,K\}$, agency $j$ computes $Y_{s1j}=\sum_{i=1}^{p}x_{ji}$ as the first pseudo response where $x_{ji}$ is the $i$-th column in $X_j+\Delta_j$.
    \item Each agency generates random orthogonal matrix, $p\times p$ invertible matrix commutative with $B_0$ and $3\times 3$ invertible matrix commutative with $C_0$ for encryption.
    \item The cloud returns model results without further perturbation.
    \item All the agencies follow protocol to decrypt received results.
\end{enumerate}
We use the privacy scheme of two-party collaborative learning as an example to illustrate violating any of the 4 requirements can result in failure to get desired $\hat{\beta}_{s1}$. Following the pre-modeling phase, the released feature matrix is 
\begin{displaymath}
\left( \begin{array}{c} A_{21}A_{11}(X_1+\Delta_1)B_1B_2 \\ A_{12}A_{22}(X_2+\Delta_2)B_2B_1 \end{array} \right)
\end{displaymath}
and the response is 
\begin{displaymath}
\left( \begin{array}{c} A_{21}A_{11}[Y_1,Y_{s11},Y_{s21}]C_1C_2 \\ A_{12}A_{22}[Y_2,Y_{s12},Y_{s22}]C_2C_1 \end{array} \right).
\end{displaymath}
If the second requirement is met, the cloud gets $\{B_2^TB_1^T[(X_1+\Delta_1)^T(X_1+\Delta_1)+(X_2+\Delta_2)^T(X_2+\Delta_2)]B_1B_2\}^{-1}B^T_2B^T_1\{[(X_1+\Delta_1)^T[Y_1,Y_{s11},Y_{s21}]C_1C_2+(X_2+\Delta_2)^T[Y_2,Y_{s12},Y_{s22}]C_1C_2\}$ from linear regression model. Suppose the cloud returns this matrix directly to each agency without perturbing it (the third requirement) and each agency decrypts it following the protocol (the fourth requirement), the decrypted model results are $[(X_1+\Delta_1)^T(X_1+\Delta_1)+(X_2+\Delta_2)^T(X_2+\Delta_2)]^{-1}\{(X_1+\Delta_1)^T[Y_1,Y_{s11},Y_{s21}]+(X_2+\Delta_2)^T[Y_2,Y_{s12},Y_{s22}]\}$.

Next we show that $Y_{s11}=\sum_{i=1}^{p}x_{1i}$ and $Y_{s12}=\sum_{i=1}^{p}x_{2i}$ ($x_{1i}$ is the $i$-th column of $X_1+\Delta_1$ and $x_{2i}$ is the $i$-th column of $X_2+\Delta_2$) guarantee that each agency gets $\hat{\beta}_{s1}=\bm{1}$. The decrypted results can be expressed in the matrix format
\begin{displaymath}
M\left( \begin{array}{c} Y_1,Y_{s11},Y_{s21} \\ Y_2,Y_{s12},Y_{s22} \end{array} \right)
\end{displaymath}
where $M\triangleq[(X^T_1+\Delta^T_1,X^T_2+\Delta^T_2)\left( \begin{array}{c} X_1+\Delta_1 \\ X_2+\Delta_2 \end{array} \right)]^{-1}(X^T_1+\Delta^T_1,X^T_2+\Delta^T_2)$.
Because 
\begin{displaymath}
\left( \begin{array}{c} 1 \\ \vdots \\ 1 \end{array} \right)=M\left( \begin{array}{c} X_1+\Delta_1 \\ X_2+\Delta_2 \end{array} \right)\left( \begin{array}{c} 1 \\ \vdots \\ 1 \end{array} \right),
\end{displaymath}
we choose 
\begin{displaymath}
\left( \begin{array}{c} Y_{s11} \\ Y_{s12} \end{array} \right)=\left( \begin{array}{c} X_1+\Delta_1 \\ X_2+\Delta_2 \end{array} \right)\left( \begin{array}{c} 1 \\ \vdots \\ 1 \end{array} \right)
\end{displaymath}
to get $\hat{\beta}_{s1}=\bm{1}$. In other words, $Y_{s11}$ and $Y_{s12}$ are the sum of columns in $X_1+\Delta_1$ and $X_2+\Delta_2$ which is the first requirement. Similarly, let 
\begin{displaymath}
\left( \begin{array}{c} Y_{s11} \\ Y_{s12} \end{array} \right)=\left( \begin{array}{c} X_1+\Delta_1 \\ X_2+\Delta_2 \end{array} \right)\left( \begin{array}{c} 0 \\ \vdots \\ 0 \end{array} \right)
\end{displaymath}
for ridge regression to get $\hat{\beta}_{s1}=\bm{0}$.

The second and the fourth requirements examine any malicious adversary performed by agencies. The third requirement examines malicious adversary performed by the cloud. So $\hat{\beta}_{s1}$ tells the truth whether any agency or the cloud deviates from the schemes and acts as malicious adversary.  

\subsubsection{Chosen plaintext attack} \label{CPA} 
Agencies participating collaborative learning may generate fake data to perform chosen plaintext attack. As an example, agency 1 performs maliciously by the following three procedures. 
\begin{enumerate}
    \item First, agency 1 sends $X^*_1=A_{11}(X_1+\Delta_1)B_1$ to agency 2. 
    \item Second, agency 2 encrypts $X^*_1$ as $X^*_{1new}=A_{21}X^*_1B_2$ where $A_{21}$ is random orthogonal matrix and $B_2=\underset{j=1}{\overset{p}{\sum}}b^{(0)}_{j}B_0^j$ ($b^{(0)}_{j}$ is random coefficient). Then agency 2 releases the encrypted data.
    \item Because only invertible encryption matrix basis $B_0$ is released to agencies, agency 1 generates random orthogonal matrix $A^+_1$ and $\hat{B}_2=\underset{j=1}{\overset{p}{\sum}}b^{*}_{j}B_0^j$ ($b^{*}_{j}$ is random parameter) and then uses equation $A_{21}X^*_1B_2=A^+_1X^*_1\hat{B}_2$ to recover $B_2$. 
\end{enumerate} 
Because $\hat{B}_2=\underset{j=1}{\overset{p}{\sum}}b^{*}_{j}B_0^j$, $\hat{B}_2$ can also be written as follows.
\begin{displaymath}
\hat{B}_2=(B_0 ~B^2_0 ~B^3_0 ~\cdots ~B^p_0)\left( \begin{array}{c} b^*_1I \\  b^*_2I \\  b^*_3I \\ \ddots  \\  b^*_pI \end{array} \right)
\end{displaymath}
where $I$ is identity matrix. So $X^*_{1new}=A^+_1X^*_1\hat{B}_2$ can be written as 
\begin{displaymath}
X^*_{1new}=A^+_1X^*_1(B_0 ~B^2_0 ~B^3_0 ~\cdots ~B^p_0)\left( \begin{array}{c} b^*_1I \\  b^*_2I \\  b^*_3I \\ \ddots  \\  b^*_pI \end{array} \right).
\end{displaymath}
Let $U\triangleq\left( \begin{array}{c} b^*_1I \\  b^*_2I \\  b^*_3I \\ \ddots  \\  b^*_pI \end{array} \right)$. The $p$ unknown parameters are all included in each column of $U$. So the above equation can be broken down into $p$ sub-equations. With the $j$-th column of $U$ being the unknown vector $(u_j)$ and the $j$-th column of $X^*_{1new}$ being $w_j$, we have 
\begin{displaymath}
w_j=A^+_1X^*_1(B_0 ~B^2_0 ~B^3_0 ~\cdots ~B^p_0)u_j, ~~~j=1,\cdots,p.
\end{displaymath}
Let $R\triangleq A^+_1X^*_1(B_0 ~B^2_0 ~B^3_0 ~\cdots ~B^p_0)$ and then $w_j=Ru_j$. The dimension of $R$ is $n\times p^2$. To solve $u_j$, we first release the restriction of $u_j$ and do not restrict to $p$ unknown parameters. The solution is related to the rank of $R$. More specifically,
\begin{enumerate}
    \item if $rank(R)<rank([R,w_j])$, there is no solution for $u_j$;
    \item if $rank(R)=rank([R,w_j])=p^2$, there is a unique solution for $u_j$;
    \item if $rank(R)=rank([R,w_j])<p^2$, there are infinite solutions for $u_j$.
\end{enumerate} 
Because \begin{scriptsize}{$rank(R)\le min\{rank(A^+_1X^*_1),rank((B_0 ~B^2_0 ~B^3_0 ~\cdots ~B^p_0))\}$}\end{scriptsize}, we have $rank(R)\le min\{n,p\}<p^2$. So it is impossible to have unique solution for $u_j$. If $rank(X^*_1)=min\{n,p\}$, we have $rank(R)=min\{n,p\}$ because $rank(B_0)=p$ and $rank(A^+_1)=n$. Then the solution has a direct relation with the dimension of $X^*_1$ as listed below.
\begin{enumerate}
    \item For $n\ge p+1$, $rank(R)=p$ and $rank([R,w_j])=p+1$. So there is no solution for $u_j$;
    \item For $n\le p$, $rank(R)=rank([R,w_j])=n<p^2$. So there are infinite solutions for $u_j$.
\end{enumerate} 
This applies for all the $p$ sub-equations ($j=1,\cdots,p$). Because orthogonal matrix for encryption is randomly generated by each agency, the true matrix $B_2$ is not a solution, i.e., $X^*_{1new}\ne A^+_1X^*_1B_2$. A toy example is given in Appendix \ref{Toy}. Each of the $p$ sub-equations derives different solutions of $b^*_j$ ($j=1,\cdots,p$) or there is no solution. Moreover, different $A^+_1$ in the equation gets different solutions. So the proposed encryption scheme is resilient to chosen plaintext attack by malicious agencies participating collaborative learning.


\subsection{Known plaintext attack} \label{kpa_0}
With both the encrypted data and partial original data released, known plaintext attack is an effective approach \cite{multiplicative5} to recover sensitive information from data encrypted by multiplicative perturbation, e.g., $AX$ and $XB$. For the proposed encryption method, we show that encryption matrices $A$ and $B$ protect against known plaintext attack.  

Suppose the adversary knows partial data (denoted as $X_{11}$) in the sensitive data. The first scenario ($\bold{I}$) is 
\[
\resizebox{1\hsize}{!}{ $X=\left( \begin{array}{c} \textcolor{pink}{X_{11}} \\ \textcolor{cyan}{X_{22}} \end{array} \right)=
\begin{bmatrix}
    \textcolor{pink}{x_{11}}      &  \textcolor{pink}{x_{12}} & \dots&  \textcolor{pink}{x_{1p}}   \\
    \vdots & \vdots & \vdots & \vdots \\
    \textcolor{pink}{x_{n_11}}      &  \textcolor{pink}{x_{n_12}} & \dots&  \textcolor{pink}{x_{n_1p}}  \\
    \textcolor{cyan}{x_{(n_1+1)1}}      &  \textcolor{cyan}{x_{(n_1+1)2}} & \dots&  \textcolor{cyan}{x_{(n_1+1)p}}  \\
    \vdots & \vdots & \vdots & \vdots \\
  \textcolor{cyan}{x_{n1}}      &  \textcolor{cyan}{x_{n2}} & \dots&  \textcolor{cyan}{x_{np}}  \\
\end{bmatrix}$}
\]
and the second scenario ($\bold{II}$) is 
\[
\resizebox{1\hsize}{!}{ $X=(\textcolor{pink}{X_{11}},\textcolor{cyan}{X_{22}})=
\begin{bmatrix}
    \textcolor{pink}{x_{11}}   & \dots   &  \textcolor{pink}{x_{1p_1}} & \textcolor{cyan}{x_{1(p_1+1)}} & \dots & \textcolor{cyan}{x_{1p}}   \\
    \textcolor{pink}{x_{21}}   & \dots   &  \textcolor{pink}{x_{2p_1}} & \textcolor{cyan}{x_{2(p_1+1)}} & \dots & \textcolor{cyan}{x_{2p}}  \\
    \vdots & \vdots & \vdots & \vdots & \vdots & \vdots \\
  \textcolor{pink}{x_{n1}}   & \dots   &  \textcolor{pink}{x_{np_1}} & \textcolor{cyan}{x_{n(p_1+1)}} & \dots &  \textcolor{cyan}{x_{np}}  \\
\end{bmatrix}$}.
\]
where the elements highlighted in pink color denote data disclosed to the adversary and elements highlighted in green color denote private data.

$\bold{I}$. For $n\times p$ dimensional $X=\left( \begin{array}{c} X_{11} \\ X_{22} \end{array} \right)$, the released data is $X^*=A(X+\Delta)=A\left( \begin{array}{c} X_{11}+\Delta_{11} \\ X_{22}+\Delta_{22} \end{array} \right)=\left( \begin{array}{c} X^*_{11} \\ X^*_{22} \end{array} \right)$. The adversary has equation $\hat{A}^TX^*=\left( \begin{array}{c} X_{11}+\hat{\Delta}_{11} \\ \hat{X}_{22}+\hat{\Delta}_{22} \end{array} \right)$ where matrices with $\hat{}$ denote recovered matrices. Because $A$ and $\hat{A}^T$ are orthogonal matrices, the adversary has equation 
\begin{footnotesize}
\begin{displaymath}
X^{*T}X^*=X^{*T}\hat{A}\hat{A}^TX^*=(X^T_{11}+\hat{\Delta}^T_{11},X^T_{22}+\hat{\Delta}^T_{22})\left( \begin{array}{c} X_{11}+\hat{\Delta}_{11} \\ \hat{X}_{22}+\hat{\Delta}_{22} \end{array} \right).
\end{displaymath}
\end{footnotesize}
The equation can be simplified as 
\begin{displaymath}
X^{*T}X^*=X^T_{11}X_{11}+\hat{X}^T_{22}\hat{X}_{22}
\end{displaymath}
by assuming $\hat{\Delta}_{11}=\hat{\Delta}_{22}=\bold{0}$. Any orthogonal transformation of $\hat{X}_{22}$ satisfying this equation can be a recovered $X_{22}$ by the adversary. Given
\begin{displaymath}
X^{*T}X^*=(X^T_{11}+\Delta^T_{11})(X_{11}+\Delta_{11})+(X^T_{22}+\Delta^T_{22})(X_{22}+\Delta_{22}),
\end{displaymath}
we have 
\begin{displaymath}
\hat{X}^T_{22}\hat{X}_{22}-X^T_{22}X_{22}=(X^T_{11}+\Delta^T_{11})(X_{11}+\Delta_{11})-X^T_{11}X_{11}+
\end{displaymath}
\begin{displaymath}
X^T_{22}\Delta_{22}+\Delta^T_{22}X_{22}+\Delta^T_{22}\Delta_{22}.
\end{displaymath}
Consider a simplified scenario assuming that the disclosed data $X_{11}$ and private data $X_{22}$ are not mixed together, i.e., they are encrypted by separate orthogonal matrices. Then we have 
\begin{displaymath}
X^{*T}X^*-X^{*T}_{11}X^*_{11}=X^{*T}X^*-X^{T}_{11}X_{11}=X^{*T}_{22}X^*_{22}.
\end{displaymath}
After getting $X^{*T}_{22}X^*_{22}$, the adversary recovers $\hat{X}_{22}$ based on $\hat{X}_{22}^T\hat{X}_{22}=X^{*T}_{22}X^*_{22}$. Since 
\begin{displaymath}
X^{*T}_{22}X^*_{22}=(X_{22}+\Delta_{22})^T(X_{22}+\Delta_{22}),
\end{displaymath}
any orthogonal transformation of $X_{22}+\Delta_{22}$ can be a possible recovered $\hat{X}_{22}$. So 
\begin{displaymath}
\hat{X}^T_{22}\hat{X}_{22}-X^T_{22}X_{22}=X^T_{22}\Delta_{22}+\Delta^T_{22}X_{22}+\Delta^T_{22}\Delta_{22}.
\end{displaymath}


$\bold{II}$. For $n\times p$ dimensional $X=(X_{11},X_{22})$, the released data is $X^*=A(X+\Delta)=A(X_{11}+\Delta_{11},X_{22}+\Delta_{22})=(X^*_{11},X^*_{22})$. Assume $X^*$ and $X_{11}$ are disclosed and the adversary tries to recover $X_{22}$ by known plaintext attack. The adversary has equation $X^*_{11}=\hat{A}(X_{11}+\hat{\Delta}_{11})$ where $\hat{A}$ and $\hat{\Delta}_{11}$ are the recovered encryption matrices by the adversary. Let $\hat{\Delta}_{11}=0$ and the simplified equation is $X^*_{11}=\hat{A}X_{11}$. Practically the adversary only knows limited information of the sensitive data $X$ (i.e., $X_{11}$ has small number of columns) and is not able to recover encryption matrix $A$. Consider the extreme case where $X_{11}$ contains at least $n$ columns and thus $X_{11}+\Delta_{11}$ is invertible. The adversary performs the following two computation steps to recover $\hat{X}_{22}$.

1. $A(X_{11}+\Delta_{11})=\hat{A}X_{11}\Rightarrow \hat{A}^TA=X_{11}(X_{11}+\Delta_{11})^{-1}$;

2. $\hat{X}_{22}=\hat{A}^TX^*_{22}=\hat{A}^TA(X_{22}+\Delta_{22})=X_{11}(X_{11}+\Delta_{11})^{-1}(X_{22}+\Delta_{22})$.




The above discussion of known plaintext attack is based on the encryption function $f(X)=A(X+\Delta)$. The recovered data derived above is further perturbed by $B$. We already prove that encryption matrix $B$ is impossible to be recovered by malicious adversary (Section \ref{CPA}). For $\Delta=\bm{0}$, we show that the proposed scheme is still secure with the encryption matrices $A$ and $B$. 

\textit{The contribution of invertible matrix $B$} Because $B$ can not be recovered and separated from $XB$, we replace $X$ with $XB$ in recovered data derived above.

\begin{itemize}
\item In scenario $\bold{I}$, the adversary has equation $X^{*T}X^*=X^T_{11}X_{11}+\hat{X}^T_{22}\hat{X}_{22}$ where $X^{*T}X^*=B^T[(X^T_{11}+\Delta^T_{11})(X_{11}+\Delta_{11})+(X^T_{22}+\Delta^T_{22})(X_{22}+\Delta_{22})]B$. So $\hat{X}^T_{22}\hat{X}_{22}-X^T_{22}X_{22}$ has lower bound as follows.
\begin{displaymath}
\hat{X}^T_{22}\hat{X}_{22}-X^T_{22}X_{22}=X^{*T}X^*-X^T_{11}X_{11}-X^T_{22}X_{22}
\end{displaymath}
\begin{displaymath}
>B^T(X^T_{22}+\Delta^T_{22})(X_{22}+\Delta_{22})B-X^T_{22}X_{22}.
\end{displaymath}
Consider a simple case without additive noise (i.e., $\Delta_{11}=\Delta_{22}=\bm{0}$). The difference between $\hat{X}^T_{22}\hat{X}_{22}$ and $X^T_{22}X_{22}$ is 
\begin{displaymath}
B^T(X^T_{11}X_{11}+X^T_{22}X_{22})B-X^T_{11}X_{11}-X^T_{22}X_{22}
\end{displaymath}
with lower bound $B^TX^T_{22}X_{22}B-X^T_{22}X_{22}$. 
\item In scenario $\bold{II}$, the columns of $X_{11}$ and $X_{22}$ are encrypted and mixed together by invertible matrix $B$. Because the adversary can not recover $B$, the adversary gets $\hat{X}_{22}=X_{11}(Z^*_{11})^{-1}Z^*_{22}$ where $(Z^*_{11},Z^*_{22})=(X_{11}+\Delta_{11},X_{22}+\Delta_{22})B$. Consider a simple case where $X_{11}$ and $X_{22}$ are encrypted by separate invertible matrices $B_1$ and $B_2$ (i.e., their columns are not mixed together). The recovered $\hat{X}_{22}$ can be simplified as $\hat{X}_{22}=X_{11}[(X_{11}+\Delta_{11})B_1]^{-1}(X_{22}+\Delta_{22})B_2$. Then $\hat{X}_{22}=X_{11}B_1^{-1}X_{11}^{-1}X_{22}B_2$ for $\Delta_{11}=\Delta_{22}=\bm{0}$. With the perturbation of the random invertible matrices $B_1$ and $B_2$, $\hat{X}_{22}$ deviates from $X_{22}$ without additive noise.
\end{itemize}

For random invertible matrix with element following normal distribution $N(0,\sigma_B)$, we show that each element in $B^TX^T_{22}X_{22}B$ has an upper bound.

The $(i,j)$-th element in $B^TX^T_{22}X_{22}B$ (i.e., element in the $i$-th row and the $j$-th column) is the product of the $i$-th row in $B^TX^T_{22}$ and the $j$-th column in $X_{22}B$. The $j$-th column in $X_{22}B$ can be expressed as 
\[
\resizebox{0.8\hsize}{!}{$\left( \begin{array}{c} x^*_{1j} \\ x^*_{2j} \\ \vdots \\ x^*_{nj} \end{array} \right)=
\begin{bmatrix}
    {x_{11}}      &  {x_{12}} & \dots& {x_{1p}}   \\
    {x_{21}}      & {x_{22}} & \dots& {x_{2p}}  \\
    \vdots & \vdots & \vdots & \vdots \\
  {x_{n1}}      &  {x_{n2}} & \dots&  {x_{np}}  \\
\end{bmatrix}\left( \begin{array}{c} b_{1j} \\ b_{2j} \\ \vdots \\b_{nj} \end{array} \right)$}
\]
where $(b_{1j},b_{2j},\cdots,b_{nj})^T$ is the $j$-th column in $B$. Because $b_{kj}$ ($k=1,\cdots,n$) follows normal distribution $N(0,\sigma_B)$, $x^*_{cj}=\underset{t=1}{\overset{p}{\sum}}x_{ct}b_{tj}$ has normal distribution $N(0,\sigma_B^2(\underset{t=1}{\overset{p}{\sum}}x_{ct}^2))$ for $c=1,\cdots,n$. Since the $i$-th row in $B^TX^T_{22}$ is the same as the $i$-th column in $X_{22}B$, the $(i,j)$-th element in $B^TX^T_{22}X_{22}B$ equals $\underset{t=1}{\overset{n}{\sum}}x^*_{ti}x^*_{tj}$. For each $t\in\{1,\cdots,n\}$, $x^*_{ti}x^*_{tj}$ follows Gamma distribution $\Gamma(1/2,2\sigma^2_B(\underset{k=1}{\overset{p}{\sum}}x_{tk}^2))$. 
Given $n$ and $X$, $\underset{k=1}{\overset{p}{\sum}}x_{tk}^2$ is fixed. For $\sigma_B\rightarrow 0$, $x^*_{ti}x^*_{tj}\rightarrow 0$. So $\underset{t=1}{\overset{n}{\sum}}x^*_{ti}x^*_{tj}\rightarrow 0$.

The encryption method $AXB$ is sufficient to protect against known plaintext attack without adding additive noise. So we set $\Delta=\bm{0}$ to ensure high data utility.

\subsection{Local differential privacy}
To achieve local differential privacy (LDP), invertible matrix $B$ is generated randomly with each element following normal distribution $N(0,\sigma_B^2)$.

As proved in \cite{DP_3,DP_2}, Johnson-Lindenstrauss (JL) transformation preserves differential privacy (DP). Given $n\times d$ matrix $X$ and $r\times n$ encryption matrix $R$ with each entry following Gaussian distribution $N(0,1)$, $RX$ preserves $(\epsilon,\delta)$-DP for a specified $r$. Here we prove $XB$ achieves LDP if entries in $B$ follow Gaussian distribution. The approaches in \cite{DP_3,DP_2} encrypt sensitive data but also need to ensure that encrypted data has acceptable data utility. In our method, the data utility is guaranteed by the property of the encryption matrix. The encryption matrix multiplication is commutative and the encryption effect can be eliminated by the post-modeling phase. Since the encryption can be eliminated in our method, the perturbation of $B$ guarantees LDP and also does not influence data utility.

\begin{theorem}\label{theorem3.} Encryption function $f(X)=XB_0$ achieves LDP if entries in encryption matrix $B_0$ follow Gaussian distribution $N(0,\sigma_{B_0}^2)$.
\end{theorem}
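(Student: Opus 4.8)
The plan is to reduce the statement to a computation with multivariate Gaussian densities, exploiting that $XB_0$ is a linear image of the Gaussian matrix $B_0$. First I would vectorize: $\mathrm{vec}(XB_0)=(I_p\otimes X)\,\mathrm{vec}(B_0)$, and since $\mathrm{vec}(B_0)\sim N(0,\sigma_{B_0}^2 I_{p^2})$ this gives $\mathrm{vec}(XB_0)\sim N(0,\ \sigma_{B_0}^2\,I_p\otimes(XX^T))$. Equivalently, the $p$ columns of $f(X)$ are independent draws from $N(0,\sigma_{B_0}^2 XX^T)$, so the output law depends on the input only through the Gram matrix $XX^T$. The key special case, which I would isolate as step one, is that when the protected input is a single record $x$ (one row), $f$ emits an isotropic Gaussian $N(0,\sigma_{B_0}^2\|x\|^2 I_p)$ whose distribution depends on $x$ only through $\|x\|^2$; two records of equal norm are therefore perfectly indistinguishable, which already shows the mechanism is well suited to the $\epsilon$-LDP definition stated earlier.

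Second, I would form the likelihood ratio between two admissible inputs. For the single-record case the density is $p_x(y)=(2\pi\sigma_{B_0}^2\|x\|^2)^{-p/2}\exp\!\big(-\|y\|^2/(2\sigma_{B_0}^2\|x\|^2)\big)$, so the ratio $p_x(y)/p_{x'}(y)$ factors into a determinant (prefactor) term $(\|x'\|/\|x\|)^{p}$ and a Gaussian tail term $\exp\!\big(\tfrac{\|y\|^2}{2\sigma_{B_0}^2}(\|x'\|^{-2}-\|x\|^{-2})\big)$. For the general matrix input the analogous ratio is $\big(\det(X'X'^T)/\det(XX^T)\big)^{p/2}\exp\!\big(-\tfrac{1}{2\sigma_{B_0}^2}\sum_j y_j^T[(XX^T)^{-1}-(X'X'^T)^{-1}]y_j\big)$, which mirrors the density manipulations underlying the Johnson--Lindenstrauss privacy analyses of \cite{DP_3,DP_2}.

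Third, I would bound this ratio by $e^{\epsilon}$, and this is where the data assumptions enter. Normalizing the records (or confining their norms to a fixed range) controls the prefactor/determinant term; one then bounds the exponential term. The main obstacle is precisely this last term: for the pair ordered so that $\|x\|>\|x'\|$ the exponent is positive and grows with $\|y\|^2$, and in the matrix case $(XX^T)^{-1}-(X'X'^T)^{-1}$ need not be positive semidefinite, so a uniform pure-$\epsilon$ bound fails on the Gaussian tail. I would resolve this either by restricting attention to a high-probability region of the output (yielding an $(\epsilon,\delta)$-style guarantee exactly as in \cite{DP_3,DP_2}) or by invoking the equal-norm normalization that makes the exponent vanish. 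I would also need to handle the degeneracy of $XX^T$ when $n>p$ by parameterizing the density on the $p$-dimensional column space that carries the output. Finally I would read off $\epsilon$ as a decreasing function of $\sigma_{B_0}$ and the norm bound, and note that, unlike the projection schemes of \cite{DP_3,DP_2}, here the perturbation by $B_0$ is removed in the post-modeling phase, so LDP is obtained without sacrificing utility.
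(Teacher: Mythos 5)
Your route is genuinely different from the paper's, and considerably sharper. The paper never writes down a density or a likelihood ratio: it observes that each coordinate of $xB_0$ is $N(0,\|x\|_2^2\sigma_{B_0}^2)$, then examines a \emph{single} coordinate and the special family of sets $(-t,t)$, writes the probability ratio as a ratio of error functions $\mathrm{erf}\bigl(t/(\sqrt{2}\|x^{(1)}\|_2\sigma_{B_0})\bigr)/\mathrm{erf}\bigl(t/(\sqrt{2}\|x^{(2)}\|_2\sigma_{B_0})\bigr)$, and argues --- partly by appeal to two figures --- that this tends to $1$ as $\sigma_{B_0}\to 0$, concluding LDP. Your vectorization $\mathrm{vec}(XB_0)\sim N(0,\sigma_{B_0}^2 I_p\otimes XX^T)$, the explicit likelihood ratio with its determinant prefactor and Gaussian tail factor, and the handling of the rank-deficient case $n>p$ have no counterpart in the paper, which treats only single rows coordinatewise; your step-one observation that the output law depends on a row only through $\|x\|_2$ (so equal-norm records are perfectly indistinguishable) is the one point of real contact with the paper's argument.

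More importantly, the obstacle you isolate in step three is genuine and is precisely what the paper's proof elides. The LDP definition in Section III quantifies over \emph{all} output sets $s$, but the erf computation covers only symmetric intervals about the origin. On tail sets the ratio $\exp\bigl(\tfrac{\|y\|^2}{2\sigma_{B_0}^2}(\|x'\|^{-2}-\|x\|^{-2})\bigr)$ is unbounded whenever $\|x\|\neq\|x'\|$, and shrinking $\sigma_{B_0}$ makes the exponent larger, not smaller; hence no finite pure $\epsilon$ exists without your equal-norm normalization, and otherwise only an $(\epsilon,\delta)$-style guarantee in the spirit of \cite{DP_3,DP_2}, obtained by excising a low-probability output region, can be salvaged. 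So your proposal does not merely reprove the theorem by another method: it proves the strongest statement that is actually true, and in doing so exposes that the paper's own proof establishes the $e^{\epsilon}$ bound only for a restricted class of events, in an asymptotic $\sigma_{B_0}\to 0$ sense, rather than uniformly over arbitrary $s$ as the stated definition requires. If you carry out the plan, make the weakened conclusion (equal-norm hypothesis or $(\epsilon,\delta)$ relaxation) explicit in the theorem statement, since as written the pure-$\epsilon$ claim cannot be met by your (correct) likelihood-ratio bound.
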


\begin{proof}
Consider any two samples, $x^{(1)}=(x^{(1)}_1,x^{(1)}_2,x^{(1)}_3,\cdots,x^{(1)}_p)$ and $x^{(2)}=(x^{(2)}_1,x^{(2)}_2,x^{(2)}_3,\cdots,x^{(2)}_p)$, that randomly chosen from all the possible inputs in $X$. 

\[
\resizebox{1\hsize}{!}{ $x^{(1)}B_0=(x^{(1)}_1,x^{(1)}_2,x^{(1)}_3,\cdots,x^{(1)}_p$)$
\begin{bmatrix}
    \textcolor{blue}{b_{11}}      &  \textcolor{pink}{b_{12}} & \dots&  \textcolor{cyan}{b_{1p}}   \\
    \textcolor{blue}{b_{21}}      &  \textcolor{pink}{b_{22}} & \dots&  \textcolor{cyan}{b_{2p}}  \\
    \vdots & \vdots & \vdots & \vdots \\
  \textcolor{blue}{b_{p1}}      &  \textcolor{pink}{b_{p2}} & \dots&  \textcolor{cyan}{b_{pp}}  \\
\end{bmatrix}=(\textcolor{blue}{x^{(1)}_{*1}},\textcolor{pink}{x^{(1)}_{*2}},\cdots,\textcolor{cyan}{x^{(1)}_{*p}})$,}
\]

\[
\resizebox{1\hsize}{!}{ $x^{(2)}B_0=(x^{(2)}_1,x^{(2)}_2,x^{(2)}_3,\cdots,x^{(2)}_p$)$
\begin{bmatrix}
    \textcolor{blue}{b_{11}}      &  \textcolor{pink}{b_{12}} & \dots&  \textcolor{cyan}{b_{1p}}   \\
    \textcolor{blue}{b_{21}}      &  \textcolor{pink}{b_{22}} & \dots&  \textcolor{cyan}{b_{2p}}  \\
    \vdots & \vdots & \vdots & \vdots \\
  \textcolor{blue}{b_{p1}}      &  \textcolor{pink}{b_{p2}} & \dots&  \textcolor{cyan}{b_{pp}}  \\
\end{bmatrix}=(\textcolor{blue}{x^{(2)}_{*1}},\textcolor{pink}{x^{(2)}_{*2}},\cdots,\textcolor{cyan}{x^{(2)}_{*p}})$.}
\]

\begin{figure}[h]
\includegraphics[width=3.4in]{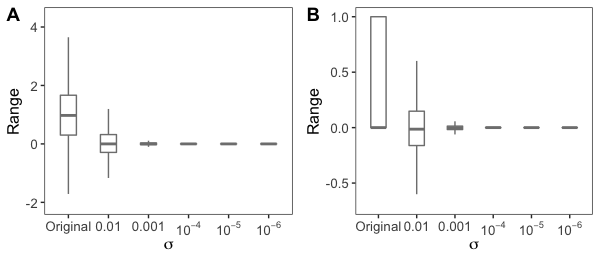}
\caption[]{For any row in $X$ (i.e., $x$), each element of $xB_0$ $\rightarrow 0$ as $\sigma_{B_0}\rightarrow 0$. The first boxplot shows the range of original $x$ and other boxplots show the range of encrypted data (i.e., $xB_0$) given different $\sigma_{B_0}$ for encryption matrix $B_0$ generation. $\bold{A}$ (continuous scenario): each element of $x$ follows $N(1,1)$; $\bold{B}$ (binary scenario): each element of $x$ is 1 or 0 with the probability of 1/2. }\label{DP_1}
\end{figure}

\begin{figure}[h]
\includegraphics[width=3.4in]{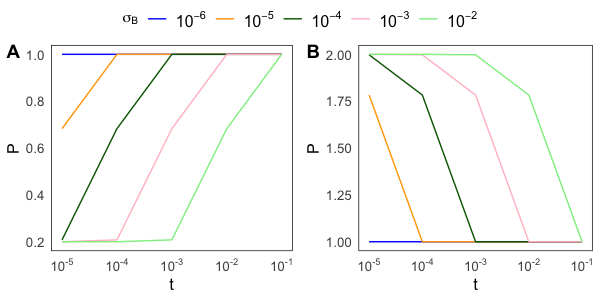}
\caption[]{$\sigma_{B_0}\rightarrow 0$ leads to $P(x^{(1)}_{*1}\in (-t,t))/P(x^{(2)}_{*1}\in (-t,t))\rightarrow 1$ with small $\sigma_{B_0}$ for small $t$ (assume $||x^{(1)}||_2=1$). ``P" (y axis) in the plot denotes $P(x^{(1)}_{*1}\in (-t,t))/P(x^{(2)}_{*1}\in (-t,t))$. A: $||x^{(2)}||_2=5$; B: $||x^{(2)}||_2=0.5$. }\label{DP}
\end{figure}

Each element in $x^{(1)}B_0$ or $x^{(2)}B_0$ is the linear combination of $B_0$ as shown in the matrix multiplication above. Because each element in $B_0$ follows normal distribution $N(0,\sigma_{B_0}^2)$, each element in $x^{(1)}B_0$ also follows normal distribution. Figure \ref{DP_1} shows that the values of $xB_0$ is close to 0 as $\sigma_{B_0}\rightarrow 0$ for any row $x$ in $X$. Specifically, each element in $x^{(1)}B_0$ follows $N(0,||x^{(1)}||^2_2\sigma_{B_0}^2)$ and each element in $x^{(2)}B_0$ follows $N(0,||x^{(2)}||^2_2\sigma_{B_0}^2)$. When $\sigma_{B_0}\rightarrow0$, these two distributions are close to each other. We use the first element $x^{(1)}_{*1}$ in $x^{(1)}B_0$ and the first element $x^{(2)}_{*1}$ in $x^{(2)}B_0$ as an example to show LDP achieved by encryption matrix $B_0$.
\begin{displaymath}
\frac{P(x^{(1)}_{*1}\in (-t,t))}{P(x^{(2)}_{*1}\in (-t,t))}=\frac{erf(t/(\sqrt{2}||x^{(1)}||_2\sigma_{B_0}))}{erf(t/(\sqrt{2}||x^{(2)}||_2\sigma_{B_0}))}
\end{displaymath}
where $erf$ is Gauss error function. For any given $||x^{(1)}||_2$, $||x^{(2)}||_2$ and $t$, there exists a $\sigma_{B_0}\rightarrow 0$ such that 
\begin{displaymath}
\frac{P(x^{(1)}_{*1}\in (-t,t))}{P(x^{(2)}_{*1}\in (-t,t))}\rightarrow 1. 
\end{displaymath}
Figure \ref{DP} shows $\frac{P(x^{(1)}_{*1}\in (-t,t))}{P(x^{(2)}_{*1}\in (-t,t))}\rightarrow 1$ when $\sigma_{B_0}\rightarrow 0$ for two examples.
In other words, the encryption method achieves local differential privacy. 
\end{proof}


\begin{theorem}\label{theorem4.} Given the commutative encryption matrix $B=\underset{j=1}{\overset{p}{\sum}}b_{j}B_0^j$, $f(X)=XB$ achieves LDP.
\end{theorem}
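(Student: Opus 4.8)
The plan is to reduce the claim to Theorem \ref{theorem3.} by extracting the dominant term of $B=\sum_{j=1}^p b_j B_0^j$ in the regime $\sigma_{B_0}\to 0$. Writing $B_0=\sigma_{B_0}G$ with $G$ having i.i.d.\ $N(0,1)$ entries, I would first expand, for any row $x$ of $X$,
\begin{displaymath}
xB=\sum_{j=1}^p b_j\,xB_0^j=\sum_{j=1}^p b_j\,\sigma_{B_0}^{\,j}\,xG^j,
\end{displaymath}
so that the $k$-th coordinate is $(xB)_k=\sum_{j=1}^p b_j\,\sigma_{B_0}^{\,j}(xG^j)_k$. Since $G$ does not depend on $\sigma_{B_0}$, each $(xG^j)_k$ is a fixed random variable, and the linear term is of order $\sigma_{B_0}$ while every term with $j\ge 2$ is of order $\sigma_{B_0}^{2}$ or smaller.

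Next I would identify the leading term with the object already analyzed: $b_1\sigma_{B_0}(xG)_k=b_1(xB_0)_k\sim N(0,b_1^2||x||_2^2\sigma_{B_0}^2)$, which is exactly the Gaussian of Theorem \ref{theorem3.} with effective standard deviation $|b_1|\sigma_{B_0}$. Because the remainder $\sum_{j\ge 2}b_j\sigma_{B_0}^{\,j}(xG^j)_k$ is negligible relative to this linear term as $\sigma_{B_0}\to 0$, the coordinate $(xB)_k\to 0$ in probability for every fixed input $x$. Hence, for any fixed $t>0$ and any input, $P((xB)_k\in(-t,t))\to 1$, so for two samples $x^{(1)},x^{(2)}$,
\begin{displaymath}
\frac{P((x^{(1)}B)_k\in(-t,t))}{P((x^{(2)}B)_k\in(-t,t))}\longrightarrow 1\qquad(\sigma_{B_0}\to 0),
\end{displaymath}
which is precisely the LDP conclusion obtained for $B_0$ in Theorem \ref{theorem3.}.

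The step I expect to be the main obstacle is the contribution of the higher powers $B_0^j$, $j\ge 2$: the coordinates of $xG^j$ are products of Gaussians and hence not Gaussian themselves, so the explicit error-function computation of Theorem \ref{theorem3.} is unavailable for the exact law of $(xB)_k$. My plan to circumvent this is to never compute that law, and instead use only the homogeneity bound $(xB_0^j)_k=\sigma_{B_0}^{\,j}(xG^j)_k=O_P(\sigma_{B_0}^{\,j})$, which reduces everything to the dominance of the $j=1$ term and the concentration $(xB)_k\to 0$. As in Theorem \ref{theorem3.}, the ratio comparison is carried out on intervals around the origin, where both probabilities tend to $1$.
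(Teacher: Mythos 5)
Your proposal is correct at the level of rigor the paper itself adopts, but it takes a genuinely different route from the paper's proof. The paper's argument is purely algebraic and contains none of your asymptotics: it factors $XB=b_1XB_0+b_2XB_0^2+\cdots+b_pXB_0^p=(XB_0)(b_1I+b_2B_0+\cdots+b_pB_0^{p-1})$, views $f$ as a composition $f_2\circ f_1$ with $f_1(X)=XB_0$ and $f_2(X^*)=X^*(b_1I+b_2B_0+\cdots+b_pB_0^{p-1})$, and concludes from Theorem~\ref{theorem3.} together with closure of differential privacy under post-processing --- no expansion $B_0=\sigma_{B_0}G$, no dominant-term analysis, no distributional treatment of the powers $B_0^j$. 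Your perturbative route is self-contained and has one real advantage: the paper's ``post-processing'' map $f_2$ is built from the very matrix $B_0$ whose randomness supplies $f_1$'s privacy guarantee, so it is not a post-processing with mechanism-independent randomness in the strict sense of the closure lemma, a hypothesis the paper glosses over and your direct analysis sidesteps entirely. What the paper's reduction buys in exchange is a transfer of the guarantee at any fixed $\sigma_{B_0}$, whereas your conclusion is intrinsically a $\sigma_{B_0}\to 0$ limit statement. Two caveats on your write-up: (i) your final inference uses only that $(xB)_k\to 0$ in probability, so that both interval probabilities tend to $1$ and their ratio trivially tends to $1$; the careful identification of the Gaussian leading term $b_1(xB_0)_k$ does no work in that step (and quietly assumes $b_1\neq 0$, though if $b_1=0$ the first nonzero $b_j$ yields the same concentration), and (ii) like the paper's own proof of Theorem~\ref{theorem3.}, you verify the ratio only on symmetric intervals $(-t,t)$ about the origin and not on arbitrary output sets --- on tail sets the two Gaussian-type laws have exponentially different mass and the ratio diverges --- so your argument establishes exactly as much, and as little, as the paper's, which is the standard you were asked to meet.
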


\begin{proof}
Suppose data is encrypted in the form of $X(\underset{j=1}{\overset{p}{\sum}}b_{j}B_0^j)$. The encrypted data can be expressed as $b_1XB_0+b_2XB_0^2+\cdots+b_pXB_0^p=XB_0(b_1+b_2B_0+\cdots+b_pB_0^{p-1})$. It is encrypted by two encryption functions $f_1(X)=XB_0$ and $f_2(X^{*})=X^{*}(b_1+b_2B_0+\cdots+b_pB_0^{p-1})$. $f_1(X)=XB_0$ achieves LDP if entries in $B_0$ follow Gaussian distribution (Theorem \ref{theorem3.}). According to closure under postprocessing property \cite{DP_1}, $f(X)=f_2(f_1(X))$ achieves local differential privacy.
\end{proof}

Similarly, $AXB$ achieves LDP based on closure under postprocessing property. With the encryption matrix $B$, the cloud server gets encrypted model estimate $B^{-1}\hat{\bm{\beta}}C$ (Section \ref{beta}) where $B$ and $C$ are encryption matrices. The invertible encryption matrices are decrypted in the post-modeling phase to get accurate model estimate and prediction accuracy remains the same as non-secure model. Data $X$ remains encrypted and secure all the time.

\subsection{Collusion attack} \label{collusion} 
The proposed encryption scheme is resilient to malicious adversary which compromises all but one agency. In this study, any released data is accessible to all agencies and the proposed encryption scheme is resilient to chosen plaintext attack and known plaintext attack. The resilience of chosen plaintext attack shows that encryption matrix $B$ can not be recovered. Any collusion among agencies is not able to provide extra information for the attack. Because the data is encrypted before release and the encryption matrix can not be recovered, the agency collusion does not increase disclosure risk. Moreover, the cloud and the outside adversary knows less prior information of the data and the encryption scheme, their attack abilities are weaker than agencies who participate collaborative learning.

\section{Performance evaluation} 

In the proposed schemes, data encryption and decryption in the pre-modeling and post-modeling phase contribute to the increasing cost while the modeling phase has the same cost as non-private model computation.

We perform experiments using four datasets from the UCI repository \cite{UCI}. All the experiments are performed in Matlab on University of Florida Hipergator 3.0 with 1 CPU and 4 RAMs.   

\textit{YearPredictionMSD}:  This dataset contains 515,345 songs with 90 features. The goal is to predict a song's published year using these 90 features. The prediction of linear regression model is rounded as the predicted year. 

\textit{Thyroid disease dataset}: Thyroid contains 7200 samples and 21 features. The response is normal (not hypothyroid) or abnormal. 

\textit{Diabetes 130-US hospitals}:  This dataset was developed to identify factors related to readmission for patients with diabetes. The features were preprocessed following the procedure in \cite{diabetes}. A total of 69,977 samples and 42 features were used to build model predicting if the patient was readmitted within 30 days of discharge (yes/no). 

\textit{Default of credit card clients Data Set}:  There are 30,000 samples and 23 features in the Taiwan credit dataset. The goal is to predict default payment (yes/no).

We assume the number of samples in each dataset is equally split into $K$ agencies while each subset contains all the features. The computation cost of the pre-modeling phase is increased when the number of agencies $K$ is increased. The privacy level and data utility are not related to $K$ since the total number of samples is fixed.

\begin{figure}[h]
\includegraphics[width=3.4in]{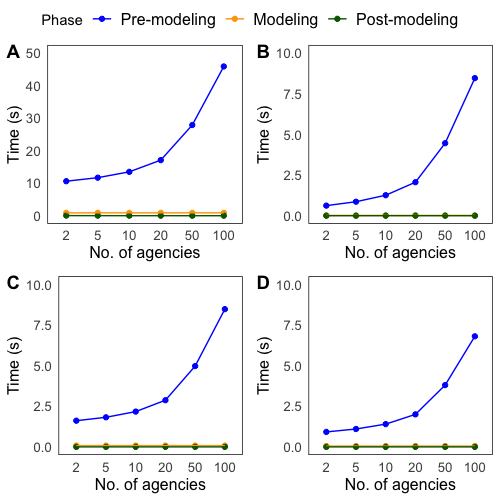}
\caption[]{Computation time of privacy preserving linear model for four datasets. A: YearPredictionMSD; B: Thyroid disease data; C: Diabetes data; D: Credit card data. }\label{time}
\end{figure}

\subsection{Efficiency}

The experiments on real data show that our privacy preserving scheme is efficient (Figure \ref{time}) and requires less computation time compared to existing secure schemes. It takes 46 seconds to analyze \textit{YearPredictionMSD} data if 100 agencies participate in the collaborative learning and each agency has around 5,000 samples. A maliciously secure linear model, \textit{Helen}, takes around 1.5 hours to finish model computation for four collaborating agencies with each holding 1,000 samples \cite{lm_03}. Another maliciously secure linear regression scheme, \textit{GuardLR} \cite{lm_2022}, takes around 218 seconds to build two-party secure linear model using \textit{Thyroid disease} data. It takes 30 seconds to perform prediction on \textit{Diabetes} data in a differentially private scheme \cite{logistic_DP}. \textit{Credit card} data was used to evaluate the secure prediction scheme proposed in \cite{logistic_2018}. However, the model takes more than 1 hour assuming the entire sample is partitioned into 16 disjoint subsets. For \textit{Thyroid disease} data, \textit{Diabetes} data, and \textit{Credit card} data, our scheme takes less than 10 seconds to conduct secure linear regression. Our privacy preserving scheme is in general more efficient than other secure schemes proposed previously.

\subsection{Privacy protection}
We choose the first 100 samples in each dataset and encrypts data with orthogonal matrix $A$ and invertible matrix $B$ with zero mean and SD=0.001. We show that the proposed encryption scheme is resilient to known plaintext attack (scenario $\bold{I}$ and $\bold{II}$).

Because the number of features (i.e., $p$) is small in each dataset, it is possible for the adversary to collect all the $p$ features of some samples and then perform known plaintext attack under scenario $\bold{I}$. Suppose $X_{11}$ contains the first $p$ samples and $p$ features. $X_{22}$ contains the remaining $100-p$ samples with $p$ featues. Figure \ref{heatmap} shows that the recovered matrix $\hat{X}_{22}^T\hat{X}_{22}$ shrinks to 0 as $\sigma_B\rightarrow 0$ for the simplified scenario $\bold{I}$ discussed in Section \ref{kpa_0}. Because the encryption matrix $B$ can be recovered in the post-modeling phase, we choose $\sigma_B\rightarrow 0$ to protect against known plaintext attack and also to achieve local differential privacy. 

\begin{figure}[!t]
\includegraphics[width=3.6in]{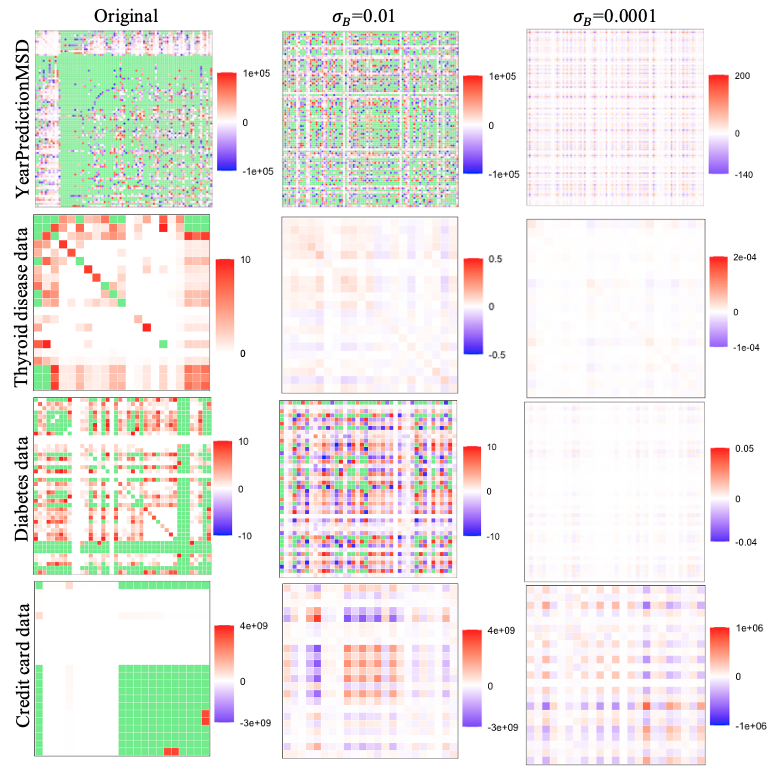}
\caption[]{Heatmap showing the differences of the recovered matrix ($\hat{X}_{22}^T\hat{X}_{22}$) for different $\sigma_B$. Blue color denotes value $<$ 0 and red color denotes value $>$ 0. Green color denotes values beyond the ranges shown in the color bar. }\label{heatmap}
\end{figure}

To simulate known plaintext attack under scenario $\bold{II}$, we assume the first $p-5$ features are disclosed, i.e., $X_{11}$ contains $p-5$ features with $p-5$ samples. $X_{22}$ contains the other 5 features for these $p-5$ samples. Figure \ref{II} shows the difference between recovered data $\hat{X}_{22}$ and original data $X_{22}$ for a randomly selected feature among 5 non-disclosed features. Different from scenario $\bold{I}$, $\sigma_B$ does not affect the amplitude of the recovered data. Because the invertible matrix is randomly generated, the perturbation is sufficient and the encrypted data is resilient to the attack under scenario $\bold{II}$.

\begin{figure}[!t]
\includegraphics[width=3.3in]{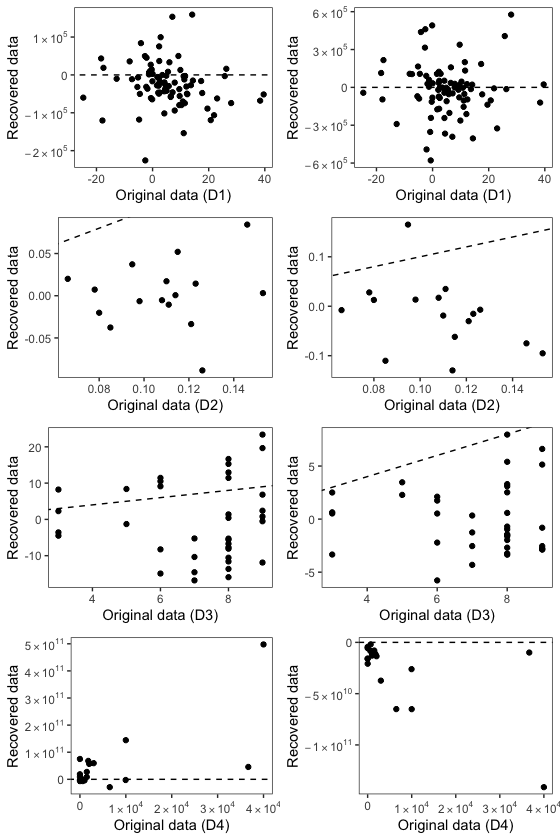}
\caption[]{Deviation of recovered data from original data given $\sigma_B=0.01$ (left plots) and $\sigma_B=0.0001$ (right plots). The dashed line denote $y=x$. D1: YearPredictionMSD; D2: Thyroid disease data; D3: Diabetes data; D4: Credit card data. }\label{II}
\end{figure}



\subsection{Prediction accuracy} 
We evaluate model performance using 10-fold cross validation with 10 iterations. In the cross validation, samples are randomly split into 10 folds with the equal size with 9 folds for training and 1 fold for testing. We use mean square error (MSE) to evaluate the prediction performance for continuous response and the area under the receiver operating characteristic curve (AUC) to evaluate accuracy in classification problems. The average MSE and AUC of 10 iterations for four datasets are shown in Figure \ref{auc}. The variation of MSE/AUC across different $\sigma_{B}$ is caused by different sample selection in the cross validation. Because model estimate is decrypted in the post-modeling phase, our scheme (regardless of the choice of $\sigma_{B}$) gets the same model result compared to non-privacy preserving linear regression using same samples for training and testing. 

\begin{figure}[h]
\includegraphics[width=3.5in]{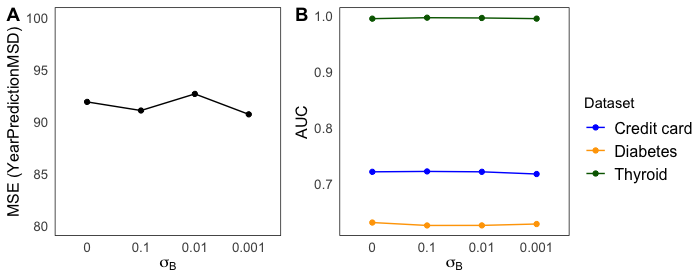}
\caption[]{Model accuracy not affected by the change of $\sigma_B$. }\label{auc}
\end{figure}

The proposed privacy preserving linear regression is efficient and has competitive accuracy when analyzing binary response in some cases. As discussed in previous studies, it is not restricted to use logistic regression with a binary response and there are compelling substantive arguments for preferring linear regression in many cases \cite{LR_1, LR_2}. Compared to existing differentially private logistic regression \cite{logistic_DP,logistic_2018}, our privacy preserving linear regression achieves similar accuracy levels and requires less computation cost. Moreover, we provide higher level of privacy protection which is resilient to malicious adversary.

\section{Conclusion}
In this paper, we propose efficient privacy preserving schemes for collaborative linear model when data is distributed among different agencies. The proposed scheme is against malicious adversaries, including chosen plaintext attack, known plaintext attack, and collusion attack. The proposed scheme also satisfies local differential privacy. It is efficient to conduct cross validation without additional communication cost. The experimental analysis shows that our scheme is more efficient than existing secure linear regression techniques against malicious adversary. Our privacy preserving scheme requires data encrypted by each agency participated in the collaborative learning which increases communication cost. As the recent development of data compression technology \cite{compression}, data can be compressed before transforming among agencies to reduce communication cost. In the future, we are interested in extending the privacy preserving schemes to other statistical models. 







\appendices




\section{Toy example of chosen plaintext attack} \label{Toy}
We give an example to show that the proposed encryption scheme is resilient to chosen plaintext attack. We set $n=p=3$ and follow the procedure described in Section \ref{CPA}. 
\begin{enumerate}
    \item Generate a random invertible matrix $B_0$ with each element following normal distribution $N(0,1)$.
    \item Agency 1 generates a $3\times 3$ random matrix $X^*_1$ with each element following normal distribution $N(1,1)$ and sends $X^*_1$ to agency 2. 
    \item Agency 2 encrypts $X^*_1$ as $X^*_{1new}=A_2X^*_1B_2$ where $A_2$ is random orthogonal matrix and $B_2=8B_0+0.3B^2_0-2B^3_0$.
    \item Agency 1 sets identity matrix as the orthogonal encryption matrix (i.e., $A^+_1=I$) and $\hat{B}_2=b^*_1B_0+b^*_2B^2_0+b^*_3B^3_0$. Equation $X^*_{1new}=A^+_1X^*_1\hat{B}_2=X^*_1\hat{B}_2$ is used to recover $B_2$. 
\end{enumerate} 
\textit{Example} We first list the matrices generated for the chosen plaintext attack. \begin{scriptsize}{$B_0=\left( \begin{array}{ccc} -0.626 & 1.595 & 0.487 \\ 0.184 & 0.330 & 0.738 \\ -0.836 & -0.820 & 0.576 \end{array} \right)$, $X^*_1=\left( \begin{array}{ccc} 0.695 & 0.379 & 0.955 \\ 2.512 & -1.215 & 0.984 \\ 1.390 & 2.125 & 1.944 \end{array} \right)$, $X^*_{1new}=\left( \begin{array}{ccc} 7.517 & -5.452 & -6.865 \\ 11.13 & -16.98 & -2.897 \\ 17.12 & -23.77 & -38.04 \end{array} \right)$. }   $X^*_{1new}=X^*_1\hat{B}_2=X^*_1(B_0~B^2_0~B^3_0)\left( \begin{array}{c} b^*_1I \\ b^*_2I \\ b^*_3I \end{array} \right)=X^*_1(B_0~B^2_0~B^3_0)\left( \begin{array}{ccc} b^*_1 & 0 & 0 \\ 0 & b^*_1 & 0 \\ 0 & 0 & b^*_1 \\ b^*_2 & 0 & 0 \\ 0 & b^*_2 & 0 \\ 0 & 0 & b^*_2 \\ b^*_3 & 0 & 0 \\ 0 & b^*_3 & 0 \\ 0 & 0 & b^*_3  \end{array} \right)$. \end{scriptsize} It can be broken down into 3 sub-equations. Let $R=X^*_1(B_0~B^2_0~B^3_0)$. The first equation is $w_1=Ru_1$ where $w_1$ is the first column of $X^*_{1new}$ and $u_1$ is the first column of \begin{scriptsize}$\left( \begin{array}{c} b^*_1I \\ b^*_2I \\ b^*_3I \end{array} \right)$\end{scriptsize}. The reduced row echelon form of $(R,w_1)$ is \begin{tiny}{$\left( \begin{array}{cccccccccc} 1 & 0 & 0 & -0.63 & 1.60 & 0.49 & 0.28 & -0.87 & 1.15 & -3.63 \\ 0 & 1 & 0 & 0.18 & 0.33 & 0.74 & -0.67 & -0.20 & 0.76 & -0.36 \\ 0 & 0 & 1 & -0.84 & -0.82 & 0.58 & -0.11 & -2.08 & -0.68 & 2.96 \end{array} \right)$}\end{tiny}. So the sub-equation has infinite solutions $u_1=$ \begin{scriptsize} $\left( \begin{array}{c} -0.63 \\ 0.18 \\ -0.84 \\ 1 \\ 0 \\ 0 \\ 0 \\ 0 \\ 0 \end{array} \right)v_4+\left( \begin{array}{c} 1.60 \\ 0.33 \\ -0.82 \\ 0 \\ 1 \\ 0 \\ 0 \\ 0 \\ 0 \end{array} \right)v_5+\left( \begin{array}{c} 0.49 \\ 0.74 \\ 0.58 \\ 0 \\ 0 \\ 1 \\ 0 \\ 0 \\ 0 \end{array} \right)v_6+\left( \begin{array}{c} 0.28 \\ -0.67 \\ -0.11 \\ 0 \\ 0 \\ 0 \\ 1 \\ 0 \\ 0 \end{array} \right)v_7+\left( \begin{array}{c} -0.87 \\ -0.20 \\ -2.08 \\ 0 \\ 0 \\ 0 \\ 0 \\ 1 \\ 0 \end{array} \right)v_8+\left( \begin{array}{c} 1.15 \\ 0.76 \\ -0.68 \\ 0 \\ 0 \\ 0 \\ 0 \\ 0 \\ 1 \end{array} \right)v_9+\left( \begin{array}{c} -3.63 \\ -0.36 \\ 2.96 \\ 0 \\ 0 \\ 0 \\ 0 \\ 0 \\ 0 \end{array} \right)$\end{scriptsize} where $v_i$ ($i=4,\cdots,9$) can be any value. Then we add the restriction back to $u_1$ (i.e., the second, third, 5th, 6th, 8th, 9th elements equal 0). So $v_5=v_6=v_8=v_9=0$, $v_4=b^*_2$, $v_7=b^*_3$, \begin{scriptsize} $\left( \begin{array}{c} -0.63 \\ 0.18 \\ -0.84 \end{array} \right)v_4+\left( \begin{array}{c} 0.28 \\ -0.67 \\ -0.11 \end{array} \right)v_7+\left( \begin{array}{c} -3.63 \\ -0.36 \\ 2.96 \end{array} \right)=\left( \begin{array}{c} b^*_1 \\ 0 \\ 0 \end{array} \right)$ \end{scriptsize}. Agency 1 gets $(b_1,b_2,b_3)=(-5.71,3.47,0.40)$. Similarly, the second sub-equation gets $(b_1,b_2,b_3)=(-11.0,-1.65,4.15)$ and the third sub-equation gets $(b_1,b_2,b_3)=(-3.45,-4.90,5.24)$.

The example shows that different sub-equations get different solutions for unknown parameters of $B_2$ and none of the solutions are close to the true values $(b_1,b_2,b_3)=(8,0.3,-2)$.

\bibliographystyle{IEEEtran}
\bibliography{references}

\end{document}